\newtheorem{theorem}{Theorem}[section]
\newtheorem{lemma}[theorem]{Lemma}
\newtheorem{corollary}[theorem]{Corollary}
\newcommand{\qed}{\mbox{\ \ \ }\rule{6pt}{7pt} \bigskip}
\renewcommand{\comment}[1]{}
\newenvironment{proof}{\noindent{\em Proof:}}{\hfill\qed}
\newcommand{\hidecite}[1]{}
\newcommand{\ex}{\mathbf{E}}
\newcommand{\rev}{\mathcal R}
\newcommand{\cdist}{\mathcal C}
\newcommand{\vdist}{\mathcal V}
\newcommand{\hr}{h}
\newcommand{\exdist}{{E}}
\newcommand{\bprice}{p^B}
\newcommand{\sprice}{p^S}
\newcommand{\Ebp}{Market Intermediation }
\newcommand{\ebp}{market intermediation }
\begin{document}
\title{eBay's Market Intermediation Problem}

\author{Kamal Jain\thanks{eBay Research Labs, San Jose, CA.}
\and Christopher A. Wilkens\thanks{University of California at Berkeley, Berkeley, CA. This work was done while the author was an intern at eBay Research Labs.}
}

\date{}
\maketitle{}

\thispagestyle{empty}

\begin{abstract} We study the optimal mechanism design problem faced by a market intermediary who makes revenue by connecting buyers and sellers. We first show that the optimal intermediation protocol has substantial structure: it is the solution to an algorithmic pricing problem in which seller's costs are replaced with virtual costs, and the sellers' payments need only depend on the buyer's behavior and not the buyer's actual valuation function.

Since the underlying algorithmic pricing problem may be difficult to solve optimally, we study specific models of buyer behavior and give mechanisms with provable approximation guarantees. We show that offering only the single most profitable item for sale guarantees an $\Omega(\frac1{\log n})$ fraction of the optimal revenue when item value distributions are independent and have monotone hazard rates. We also give constant factor approximations when the buyer considers all items at once, $k$ items at once, or items in sequence.

\end{abstract}

\section{Introduction}

The internet has connected people in ways unfathomable a few decades ago. This newfound ease of communication has ushered in a new era in commerce, yet, alone the internet is just a tool --- without intermediaries like Google, eBay, and Amazon to help buyers navigate and connect with sellers, e-commerce would not have flourished. In particular, sites like eBay and Amazon have established themselves as nexuses connecting buyers and sellers in ways that were impossible without the internet. This raises a natural design question: given a group of buyers and sellers, what is the optimal way to intermediate between them? We call this the \ebp problem.

Market intermediation is fundamental to e-commerce. First, it solves an informational problem by connecting buyers and sellers who were not already aware of each other. For example, without a central portal like Google or eBay, a buyer might only shop at the online stores for sellers he already knew about offline. Second, an intermediary reduces transaction costs by making the most appropriate connections between buyers and sellers. Even if buyers knew all the sellers on the internet, there is a small bargaining cost for a buyer and seller to exchange information about supply, demand, and prices. The wide variety of buyers and sellers on the internet would make such costs prohibitively large --- an intermediary reduces costs by aggregating information, allowing buyers and sellers to be matched efficiently.

Our work has roots in the classic economic analysis of bilateral trade and double auctions. In 1983, Myerson and Satterthwaite~\cite{MS83} asked a simple question: will two self-interested people always trade when it is beneficial? The intriguing answer --- there is no equilibrium in which two parties trade whenever it is beneficial without the help of an intermediary --- derives from constraints on truthful payments, and their subsequent analysis of ``Trading with a Broker'' precisely models our scenario with a single buyer and a single seller. Myerson and Satterthwaite's result~\cite{MS83} was the first in a line of research on double auctions (e.g.~\cite{FR93}), where many buyers and sellers place bids on a single commodity. More recent contributions from computer scientists have considered efficient algorithms and approximations (e.g.~\cite{WWW98,DGHK02,DGTZ12}) as well as the effect of a network structure~\cite{BEKT07}. The most relevant work to our own is a very recent paper by Deng et al.~\cite{DGTZ12}, who give approximation algorithms for optimizing revenue in Bayesian settings with multiple buyers, sellers, and items.

The \ebp problem also has roots in the more recent computer science study of algorithmic pricing: given a collection of items and knowledge of buyer behavior, how should the items be priced? This basic computational question was first studied by Rusmevichientong et al.~\cite{RVG06} and Aggarwal et al.~\cite{AFMZ04} in a rank-buying model. Subsequent research considered a wide variety of other models, the most relevant results being Chawla, et al.~\cite{CHK07,CHMS10} who gave approximations for the case of a single unit-demand buyer and Cai and Daskalakis~\cite{CD11} who gave an FPTAS for the same setting.

The \ebp problem combines these two research areas: interaction with the buyer is essentially an algorithmic pricing problem, yet the items for sale are brought by selfish agents as in double auctions.

\paragraph{Results.} We study the \ebp problem in a setting with a single unit-demand buyer and a group of $n$ sellers. First, we show that the optimal solution to this problem has substantial structure --- it is monotone for the sellers (necessary and sufficient for incentive compatibility), buyer prices can be reduced to an algorithmic pricing problem in which the intermediary is itself the seller, and the payments to the sellers need only depend on the costs and likelihood of sale (not the precise value of the buyer).

The \ebp problem thus reduces to a computational one: the intermediary must solve an algorithmic pricing problem to choose the prices the buyer sees and then compute payments to the sellers. Since the sellers are single-parameter agents, their payments are given by a straightforward integral expression~\cite{M81,AT01}, and even if that integral is impossible to compute we can use tricks like Archer et al.~\cite{APTT03} to maintain truthfulness in expectation. On the other hand, the optimal prices for the buyer may be more difficult to compute, so we study a variety of efficient approximation strategies for the intermediary.

The first approximation we consider is the simplest strategy for the intermediary: pick the single best item and sell it. We show that this guarantees an $\Omega(\frac1{\log n})$ fraction of the optimal revenue when the items' values are drawn from independent, monotone hazard rate distributions.

Next, we consider approximations when the buyer picks his favorite item from a set. When the buyer considers all items, we adapt the anonymous virtual reserve price of Chawla et al.~\cite{CHMS10} to give a $\frac12$-approximation to the optimal revenue.\footnote{A result of Deng et al.~\cite{DGTZ12} implies a weaker $\frac14$-approximation in this setting. Their result is technically incomparable to ours --- they study a more general setting when there are multiple buyers who each demand an arbitrary number of goods, but they make substantially stronger assumptions about the representation of probability distributions.} In the more realistic setting where the intermediary can show the buyer at most $k$ items, the intermediary must choose which $k$ items to show and then compute prices. Here, the intermediary wants to approximate the optimal revenue that can be obtained by selling at most $k$ items. We show that an anonymous virtual reserve price can be combined with a greedy algorithm to give a $\frac{e-1}{2e}$-approximation, provided the intermediary can estimate the expected maximum value of a set of items (results of Cai and Daskalakis~\cite{CD11} show that this requirement is nontrivial, as we discuss later).

Finally, we consider a buyer who looks at items in order until he finds one he is willing to buy. We show that the order is less important than the prices: choosing any ordering and computing the optimal prices for that order gives a $\frac12$-approximation to the revenue of the optimal cascade.

\section{The \Ebp Problem}\label{sec:prelim}

The \ebp problem models an intermediary connecting a buyer to a group of $n$ sellers. Each seller $i$ is offering a distinct, indivisible item and faces a privately-known cost of $c_i\sim\cdist_i$ when a sale is made (e.g. the on-demand manufacturing cost). Sellers report their costs when they come to the mechanism. We use $c$ to denote the vector of costs and $\cdist=\cdist_1\times\dots\times\cdist_n$ to denote the distribution of $c$. We assume that the distributions for costs are independent and have monotone hazard rates\footnote{The monotone hazard rate requirement is applied to the distribution $-\cdist_i$ so as to guarantee that the virtual cost $\theta_{\cdist_i}(c_i)$ is monotone.} (see Section~\ref{sec:mhr}).

 A unit-demand buyer with type $t\in T$ (where $T$ is independent of $\cdist$) comes to the intermediary to buy an item. After interacting with the intermediary, the buyer purchases item $i$ with probability $x_i(t,c)$, paying an (expected) price $\bprice_i(t,c)$.

The intermediary otherwise has complete control over the interaction between the buyer and the seller --- its goal is to maximize revenue through a truthful protocol, that is, to pay sellers in such a way that revealing their true costs $c$ is a dominant strategy.

\paragraph{The Independent Values Setting.} Our algorithmic results assume the buyer's valuation is described by $n$ values $v_i$ drawn from independent distributions $\vdist_i$.

The buyer is generally presented with a set of prices $\bprice_i$ and chooses the item that maximizes $v_i-\bprice_i$; however, he may not consider all items available. For example, a buyer on eBay may only consider the first page of search results or may look at items until he finds something he likes. We discuss specific attention models in Sections~\ref{sec:lfa} and~\ref{sec:ca}.

\subsection{Distributions, Hazard Rates, and Virtual Values}\label{sec:mhr}

We represent distributions with density functions $f(z)$ and cumulative density functions $F(z)$. For such a distribution $F(z)$, the {\em hazard rate} is given by
\[\hr(z)=\frac{f(z)}{1-F(z)}\enspace.\]
The following identity directly relates $F(z)$ to the hazard rate:
\begin{equation}F(z)=1-e^{-\int_0^z\hr(x)dx}\enspace.\label{eqn:hrid}\end{equation}
A noteworthy distribution is the exponential distribution $\exdist_\gamma(z)=1-e^{-\gamma z}$, which has a constant hazard rate $\hr_{\exdist_\gamma}(z)=\gamma$.

A distribution $F$ is said to have a {\em monotone hazard rate (MHR)} if $\hr(z)$ is non-decreasing. MHR distributions are convenient because the Myersonian {\em virtual value}~\cite{M81} is monotone. Myerson's virtual value $\phi_F(v)$ represents the marginal revenue obtained by selling to a buyer with value $v\sim F(v)$:
\[\phi_F(v)=v-\frac1{\hr(v)}=v-\frac{1-F(v)}{f(v)}\enspace.\]
For this work, we also define the {\em virtual cost} of a seller as
\[\theta_F(c)=c+\frac{F(c)}{f(c)}\enspace.\]
This represents the marginal cost to the intermediary when a seller with cost $c$ makes a sale.

\subsection{Optimal Single-Item Mechanisms}
We highlight two important classical optimality results. First, Myerson's optimal auction~\cite{M81} says that the optimal way to sell a single item to a single buyer with an MHR distribution is to choose the price $p_M$ satisfying $\phi_\vdist(p_M)=0$. When the seller has an outside option $\rev_{OUT}$ and faces a cost $c$, Myerson's result says that the optimal price $\eta$ satisfies
\[\eta-c-\rev_{OUT}=\frac1{\hr_\vdist(\eta)}\enspace.\]

Myerson and Satterthwaite~\cite{MS83} later considered the case where a selfish player holds the item (instead of the auctioneer). Their result precisely corresponds to the \ebp problem with one seller ($n=1$):
\begin{theorem}[Myerson and Satterwhaite]
In the revenue optimal intermediation with a single seller and MHR distributions, the buyer purchases the item if and only if $\phi_\vdist(v)-\theta_\cdist(c)\geq0$.
\end{theorem}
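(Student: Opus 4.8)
The plan is to run the classical Myerson--Satterthwaite argument, adapted to the fact that here a third party (the intermediary), rather than the buyer or the seller, collects the surplus. By a revelation-principle argument, the optimal protocol may be taken to be a direct, incentive-compatible mechanism specified by a trade probability $x(v,c)$, a buyer payment $\bprice(v,c)$, and a seller payment $\sprice(v,c)$; it then suffices to show the revenue-optimal such mechanism has $x(v,c)=1$ exactly when $\phi_\vdist(v)-\theta_\cdist(c)\ge0$. (With one seller, the dominant-strategy requirement for the seller coincides with Bayesian incentive compatibility, so the standard single-parameter machinery applies to both parties.)

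First I would reduce incentive compatibility and individual rationality to the usual single-parameter conditions. For the buyer, IC is equivalent to the interim trade probability $\bar x(v)=\ex_c[x(v,c)]$ being nondecreasing together with the envelope identity $U^B(v)=U^B(0)+\int_0^v\bar x(z)\,dz$ for the buyer's interim utility; for the seller, IC is equivalent to $\bar x(c)=\ex_v[x(v,c)]$ being nonincreasing together with $U^S(c)=U^S(\bar c)+\int_c^{\bar c}\bar x(z)\,dz$. Substituting these into the expected transfers and integrating by parts gives $\ex_v[\bprice]=\ex_v[\phi_\vdist(v)\bar x(v)]-U^B(0)$ and $\ex_c[\sprice]=\ex_c[\theta_\cdist(c)\bar x(c)]+U^S(\bar c)$, so that by linearity of expectation
\[\rev=\ex_{v,c}\big[\big(\phi_\vdist(v)-\theta_\cdist(c)\big)\,x(v,c)\big]-U^B(0)-U^S(\bar c).\]
Since IR only asks $U^B(v)\ge0$ and $U^S(c)\ge0$, and since $U^B$ is nondecreasing while $U^S$ is nonincreasing, the revenue-maximizing choice drives $U^B(0)=0$ and $U^S(\bar c)=0$, after which the objective is maximized by choosing $x(v,c)\in[0,1]$ pointwise. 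The pointwise optimum is precisely $x(v,c)=\mathbf{1}[\phi_\vdist(v)-\theta_\cdist(c)\ge0]$, with ties on the (measure-zero) boundary broken arbitrarily.

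It then remains to check that this allocation rule is feasible, i.e.\ that it is induced by some incentive-compatible, individually-rational mechanism. The MHR assumption on $\vdist$ makes $\phi_\vdist$ nondecreasing, and the MHR assumption on $-\cdist$ (the reason for stating the hypothesis that way, as noted in the footnote defining the virtual cost) makes $\theta_\cdist$ nondecreasing; hence the proposed $x(v,c)$ is nondecreasing in $v$ and nonincreasing in $c$, so the interim probabilities $\bar x(v),\bar x(c)$ inherit these monotonicities and the conditions from the previous paragraph are met. The payments implementing the rule with $U^B(0)=U^S(\bar c)=0$ are then read off from the two envelope formulas.

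The main obstacle is not conceptual but bookkeeping: one must be careful that the information-rent boundary terms sit at the correct endpoints---$v=0$ for the buyer and $c=\bar c$ for the seller---and that they can actually be set to zero. When $\cdist$ has unbounded support this needs the MHR tail estimate to ensure $\lim_{c\to\infty}U^S(c)=0$, and one should likewise confirm the integration-by-parts steps are valid there; everything else is the routine Myerson substitution.
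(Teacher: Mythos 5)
The paper does not actually prove this theorem: it is stated as a classical result and attributed to Myerson and Satterthwaite~\cite{MS83}, with no proof supplied. Your argument correctly reproduces the classical derivation (interim envelope conditions, substitution to express expected transfers via virtual value and virtual cost, Fubini/integration by parts, driving the boundary information rents $U^B(0)$ and $U^S(\bar c)$ to zero under IR, pointwise maximization of $(\phi_\vdist(v)-\theta_\cdist(c))x(v,c)$, and the MHR feasibility check giving the required monotonicity of the interim allocation). It is worth noting that your derivation is exactly what one obtains by specializing the paper's own general machinery---Theorems~3.2 and~3.3, which establish seller monotonicity and the revenue expression $\ex[\rev]=\ex_{t,c}\bigl[\sum_i x_i(t,c)(\bprice_i(c)-\theta_{\cdist_i}(c_i))\bigr]$---to the case $n=1$: the underlying algorithmic pricing problem for a single item with seller virtual cost $\theta_\cdist(c)$ is precisely Myerson's single-buyer problem, which sells exactly when $\phi_\vdist(v)\ge\theta_\cdist(c)$. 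So your proof is consistent with the paper's framework even though the paper chooses to cite rather than prove.

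One small point you flag yourself is real but benign: when $\cdist$ has unbounded support, one should treat $U^S(\bar c)$ as $\lim_{c\to\infty}U^S(c)$ and verify the Fubini step converges; under the MHR hypothesis the tail of $1-G$ decays at least exponentially, so the integrals are finite and the boundary term can indeed be set to zero. Also, to avoid a sign ambiguity: the paper's $\sprice_i$ in Theorem~3.3 is written as a (negative) transfer added to the intermediary's revenue, whereas you write the seller payment with the opposite sign and subtract; both conventions lead to the same expression $\ex[(\phi_\vdist(v)-\theta_\cdist(c))x]$, but a reader should be warned the two are reconciled by a sign flip.
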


\section{Revenue Optimization in the General Setting}\label{sec:revopt}

In this section, we show that the revenue optimal intermediation scheme has three nontrivial features:
\begin{enumerate}
\item {\em The revenue optimal scheme is monotone for the sellers.} Following classical results for single-parameter agents~\cite{M81,AT01}, this is a necessary and sufficient criterion for seller incentive compatability.
\item {\em The prices seen by the bidder $\bprice_i$ can be computed as if the intermediary were the seller and faced a cost of $\theta_{\cdist_i}(c_i)$.} This gives a natural reduction from the \ebp problem to an algorithmic pricing problem.
\item {\em The payments to the sellers $\sprice_i$ only depend on the costs and purchase decision of the buyer, not on the exact values $v$.} This implies that the intermediary need not ask the buyer for his actual value.
\end{enumerate}
Algorithm~\ref{alg:general} gives a general mechanism for the \ebp problem incorporating these features.
\begin{algorithm}[h]\label{alg:general}
\SetKwInOut{Input}{input}\SetKwInOut{Output}{output}
\SetKw{KwC}{Compute}
\SetKw{KwSolicit}{Solicit}
\SetKw{KwDisp}{Display}
\SetKw{KwPay}{Pay}
\SetKwBlock{CBlock}{Compute...}{}
\nl \KwSolicit costs $c_i$ from sellers\;
\nl \KwC virtual costs $\theta_{\cdist_i}(c_i)$\;
\nl \KwC a protocol for interacting with the buyer, including what items are shown and at what prices\tcp*[r]{The protocol may be restricted when the buyer's attentionis limited.}
\nl \KwPay the seller whose item was bought $\sprice_i(c)$\tcp*[r]{$\sprice_i$ need not depend on the precise value of the buyer, only on the likelihood of sale.}
\caption{A general mechanism for the \ebp problem.}
\end{algorithm}

We first prove seller monotonicity:
\begin{theorem}
The revenue optimal instantiation of the general algorithm (Algorithm~\ref{alg:general}) is monotone for the sellers, i.e. the probability of selling item $i$ is decreasing in $c_i$.
\end{theorem}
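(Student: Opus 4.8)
The plan is to derive this from the classical single‑parameter payment identity for the sellers together with a swapping (exchange) argument. Since monotonicity of the sale probability $q_i(\cdot,c_{-i})$ is, by the classical single‑parameter theory \cite{M81,AT01}, exactly the condition under which the sellers can be made incentive compatible, the real content of the statement is that revenue optimization never pushes the intermediary out of the monotone region. First I would record that, because seller $i$ is a single‑parameter agent with private cost $c_i$, truthfulness pins down the expected payment to seller $i$ by the usual integral formula (the procurement analogue of Myerson's lemma), with the boundary condition that the highest‑cost type is left with zero surplus. Integrating that expression by parts over $c_i$ rewrites the intermediary's expected revenue as $\expect_c[\Pi(c)]$, where $\Pi(c)$ is the expected \emph{virtual surplus} of the protocol run at profile $c$: the buyer's expected payment minus $\sum_i \theta_{\cdist_i}(c_i)\cdot(\text{prob.\ the buyer buys item }i)$. (This is exactly the quantity that later justifies replacing $c_i$ by $\theta_{\cdist_i}(c_i)$; here I only need the identity.)

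Given this reformulation, the revenue‑optimal instantiation chooses, for each reported profile $c$, a protocol $\pi_c$ maximizing $\Pi(\pi,c)$ — provided the induced allocation is monotone, so that the canonical seller payments really are incentive compatible. So it remains to show that any pointwise‑optimal selection $\{\pi_c\}$ has $q_i(\cdot,c_{-i})$ non‑increasing, and this is the heart of the proof. Fix $c_{-i}$ and costs $c_i<c_i'$, and let $\pi,\pi'$ be the chosen protocols with $q_i,q_i'$ the corresponding probabilities the buyer buys item $i$. The key point is that transplanting a fixed protocol from one cost profile to another leaves the buyer's behavior unchanged (the buyer never observes the costs, only the items and prices the protocol presents), so the buyer's payment and every $q_j$ are unchanged and only the procurement terms move: $\Pi(\pi,c')=\Pi(\pi,c)-(\theta_{\cdist_i}(c_i')-\theta_{\cdist_i}(c_i))\,q_i$, and symmetrically for $\pi'$. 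Adding the two optimality inequalities $\Pi(\pi,c)\ge\Pi(\pi',c)$ and $\Pi(\pi',c')\ge\Pi(\pi,c')$ and cancelling yields $(\theta_{\cdist_i}(c_i')-\theta_{\cdist_i}(c_i))(q_i-q_i')\ge 0$; since the MHR assumption on $-\cdist_i$ makes $\theta_{\cdist_i}$ strictly increasing, $q_i'\le q_i$, which is precisely seller monotonicity. The same inequality shows the relaxation was tight, so the pointwise‑optimal protocols plus canonical payments constitute the revenue‑optimal truthful mechanism.

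The step I expect to be the main obstacle is making the revenue‑to‑virtual‑surplus reduction fully rigorous: one must be careful with the orientation of the $c_i$‑integral and the individual‑rationality boundary condition (the reverse‑auction sign conventions are opposite to the familiar forward‑auction ones), and one must deal with the mild circularity that the payment identity presupposes incentive compatibility — which is dissolved exactly by the "relax the monotonicity constraint, optimize $\Pi(\cdot,c)$ pointwise, then verify the optimum is monotone" structure above. A secondary bookkeeping point is to justify treating an arbitrary buyer‑interaction protocol as a black box whose outcome is summarized by the pair $(x_i(t,c),\bprice_i(t,c))$, so that transplanting it across cost profiles is well defined; once that is in place the exchange argument is routine and, conveniently, needs no tie‑breaking care, since the inequality $q_i'\le q_i$ holds for \emph{every} selection of pointwise optima.
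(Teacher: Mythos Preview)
Your proposal is correct and follows essentially the same route as the paper: both arguments write the intermediary's objective at a fixed cost profile as expected payment minus $\sum_i\theta_{\cdist_i}(c_i)q_i$, transplant the optimal protocol from one profile to another (using that the buyer's behavior is cost‑independent), add the two optimality inequalities, and cancel to obtain $(\theta_{\cdist_i}(c_i')-\theta_{\cdist_i}(c_i))(q_i-q_i')\ge 0$. If anything, your write‑up is more careful than the paper's about the ``relax, optimize pointwise, then verify monotonicity'' structure that dissolves the apparent circularity between the payment identity and incentive compatibility.
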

\begin{proof} Suppose $T$ is represented as a probability measure. Define the cost vector $c^{+i}$ as the vector that increases $i$'s cost by $\delta$, i.e.
\[c^{+i}_j=\begin{cases}c_j+\delta&i=j\\c_j&otherwise.\end{cases}\]

Consider the optimal intermediation scheme for a type $t$ and cost vector $c$. Let $x_i(t,c)$ denote the probability that the buyer purchases item $i$ and $\bprice_i(t,c)$ denote the expected price for this scheme as defined in Section~\ref{sec:prelim}.

Since the distribution $T$ is independent of the cost distributions, the buyer's behavior does not depend on $c$. Thus, when the types and costs are $t$ and $c$ respectively, optimality implies that the intermediary makes more money selling according to $x_i(t,c)$ and $\bprice_i(t,c)$ than if it acted as if the cost vector was $c^{+j}$. Thus:
\[\sum_i\int x_i(t,c)(\bprice_i(t,c)-\theta_{\cdist_i}(c_i))dT\geq\sum_i\int x_i(t,c^{+j})(\bprice_i(t,c^{+j})-\theta_{\cdist_i}(c_i))dT\enspace.\]
Likewise, optimality for $c^{+j}$ implies
\[\sum_i\int x_i(t,c^{+j})(\bprice_i(t,c^{+j})-\theta_{\cdist_i}(c_i^{+j}))dT\geq\sum_i\int x_i(t,c)(\bprice_i(t,c)-\theta_{\cdist_i}(c_i^{+j}))dT\enspace.\]
Combining these two and canceling terms gives
\[\int x_j(t,c)(\theta_{\cdist_j}(c_j+\delta)-\theta_{\cdist_j}(c_j))dT\geq\int x_j(t,c)(\theta_{\cdist_j}(c_j+\delta)-\theta_{\cdist_j}(c_j))dT\enspace.\]
Since the virtual costs are assumed to be monotone, this implies
\[\int x_j(t,c)dT\geq\int x_i(t,c)dT\]
as desired.

\end{proof}

Given monotonicity, we can apply classical results to compute the revenue and show that the intermediary's pricing problem can be formulated as an algorithmic pricing problem with virtual costs:
\begin{theorem}
The revenue-optimal intermediation protocol can be described as follows: for each cost vector $c$, solve the underlying algorithmic pricing problem for $p$ assuming the cost of item $i$ is $\theta_{\cdist_i}(c_i)$.
\end{theorem}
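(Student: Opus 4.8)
The plan is to express the intermediary's expected revenue as an expectation, over the reported cost vector $c$, of a quantity that equals exactly the profit of a monopolist who owns all $n$ items and faces cost $\theta_{\cdist_i}(c_i)$ on item $i$; the theorem then follows by observing that this expectation is maximized by optimizing the integrand for each $c$ separately.

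First I would restrict attention to mechanisms that are monotone for the sellers: monotonicity is necessary for seller incentive compatibility, so this is without loss, and it is also what the previous theorem establishes for the optimum. For a monotone scheme the classical single-parameter characterization~\cite{M81,AT01} pins down the payment $\sprice_i$ to seller $i$, up to an additive constant that individual rationality forces to be zero at the top of the support, in terms of the interim sale probability $x_i(c_i)=\expect_{t,c_{-i}}[x_i(t,c)]$. Writing out this payment formula and integrating by parts --- the same manipulation that converts buyer values into Myersonian virtual values, but now with $\theta_F(c)=c+F(c)/f(c)$ in the role of a virtual cost --- yields $\expect_{c_i}[\sprice_i]=\expect_{c_i}\!\left[\theta_{\cdist_i}(c_i)\,x_i(c_i)\right]$. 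I would need to be mildly careful here about boundary terms and about the (possibly unbounded) support of $\cdist_i$, but this is routine.

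Summing over $i$ and subtracting from the buyer's expected payment $\expect_{c,t}[\sum_i x_i(t,c)\bprice_i(t,c)]$, the intermediary's expected revenue becomes
\[
\expect_{c,t}\Big[\sum_i x_i(t,c)\big(\bprice_i(t,c)-\theta_{\cdist_i}(c_i)\big)\Big]
=\expect_{c}\Big[\,\expect_{t}\Big[\sum_i x_i(t,c)\big(\bprice_i(t,c)-\theta_{\cdist_i}(c_i)\big)\Big]\Big]\enspace.
\]
For each fixed $c$ the inner expectation is precisely the objective of the underlying algorithmic pricing problem in which the intermediary itself owns the items, faces cost $\theta_{\cdist_i}(c_i)$ on item $i$, and sells to a buyer of type $t\sim T$ subject to whatever attention restrictions are in force; the prices the buyer sees and the protocol are the decision variables. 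Because $T$ is independent of $\cdist$, the set of available protocols is the same for every $c$, so the outer expectation is maximized by choosing, for each $c$, a protocol that maximizes the inner term --- that is, by solving the algorithmic pricing problem with costs $\theta_{\cdist_i}(c_i)$.

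The step I expect to be the main obstacle is verifying that this pointwise-optimal protocol really does give a monotone (hence truthfully implementable) scheme, since otherwise the decomposition above would be optimizing over a strictly larger class than the implementable mechanisms. I would dispatch this with the same exchange argument used in the previous theorem: comparing a pointwise optimum for $c$ with one for the perturbation $c^{+i}$, the price terms cancel and monotonicity of $\theta_{\cdist_i}$ forces $\expect_{t}[x_i(t,c)]\ge\expect_{t}[x_i(t,c^{+i})]$, which is exactly interim monotonicity. Thus the pointwise-optimal protocol is monotone, it is implementable with the seller payments described above, and by construction it attains the revenue upper bound derived above --- so it is optimal, which is the claim.
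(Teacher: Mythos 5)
Your proposal is correct and takes essentially the same route as the paper: both express the expected seller payments via the single-parameter payment characterization, rearrange (integrate by parts) to replace payments with the virtual cost $\theta_{\cdist_i}(c_i)$, and read off that the revenue is that of a monopolist facing costs $\theta_{\cdist_i}(c_i)$, to be optimized pointwise in $c$. You are somewhat more careful than the paper --- you explicitly justify the pointwise optimization and re-verify that the pointwise optimum is monotone (the paper delegates that to its preceding theorem and compresses the integration by parts into ``rearranging in the standard way'') --- but the underlying argument is the same.
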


\begin{proof}
Let $x_i(t,c)$ be the probability that the buyer purchases item $i$ and let $\pi_t$ be the probability that it has type $t$. From~\cite{M81,AT01} we know that for incentive compatibility the payment of seller $i$ must be
\[\sprice_i(t,c)=-c_ix(t,c)-\int_0^{c_i}x(t,c_{-i}u)du\enspace.\]
We can thus write the expected revenue of the mechanism as
\[\ex[\rev]=\int f_{\cdist}(c)\int\sum_i(\bprice_i(c)x_i(t,c)+\sprice_i(t,c))dTdc\enspace.\]
Rearranging in the standard way gives
\[\ex[\rev]=\int f_{\cdist}(c)\int\sum_ix_i(t,c)(\bprice_i(c)-\theta_{\cdist_i}(c_i))dTdc\enspace.\]
That is, the revenue of the intermediary is precisely as if he faces a cost of $\theta_{\cdist_i}(c_i)$ for item $i$.
\end{proof}

\begin{corollary}\label{cor:algebayapprox}
An $\alpha$ approximation to the underlying algorithmic pricing problem without costs gives an $\alpha$ approximation to the \ebp problem as long as it is monotone from the sellers' perspectives.
\end{corollary}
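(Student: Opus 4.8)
The plan is to exploit the fact that the revenue identity derived in the proof of the previous theorem never used optimality --- only seller monotonicity and the single-parameter payment characterization of \cite{M81,AT01}. Concretely, for \emph{any} intermediation scheme that is monotone from the sellers' perspectives, with buyer allocation $x_i(t,c)$ and buyer prices $\bprice_i(c)$, the same rearrangement yields
\[\ex[\rev]=\int f_{\cdist}(c)\int\sum_i x_i(t,c)\bigl(\bprice_i(c)-\theta_{\cdist_i}(c_i)\bigr)\,dT\,dc\enspace.\]
So for each fixed cost vector $c$ the inner quantity $\int\sum_i x_i(t,c)(\bprice_i(c)-\theta_{\cdist_i}(c_i))\,dT$ is exactly the objective of the underlying algorithmic pricing problem in which item $i$ is assigned cost $\theta_{\cdist_i}(c_i)$; write $\mathrm{OPT}_c$ for its optimum (over whatever class of protocols the buyer's attention model permits, which does not depend on $c$).

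First I would record that the optimal \ebp revenue equals $\int f_{\cdist}(c)\,\mathrm{OPT}_c\,dc$. This is essentially the content of the two preceding theorems: the integrand above may be maximized pointwise in $c$ since the buyer's behavior --- and hence the feasible set of $(x,\bprice)$ at each $c$ --- is independent of $c$, and the resulting pointwise maximizer is monotone in the sellers by the monotonicity theorem, so it is a feasible truthful mechanism and the bound $\int f_{\cdist}(c)\,\mathrm{OPT}_c\,dc$ is genuinely attained and optimal.

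Next, given an algorithm that $\alpha$-approximates the algorithmic pricing problem, run it, for every cost vector $c$, on the instance with item costs $\theta_{\cdist_i}(c_i)$; this produces buyer prices and an allocation whose inner objective $\mathrm{ALG}_c$ satisfies $\mathrm{ALG}_c\ge\alpha\,\mathrm{OPT}_c$ pointwise. By hypothesis this family of solutions is monotone in $c$, so it can be completed to a valid truthful mechanism in the form of Algorithm~\ref{alg:general} via the standard seller payments, and the revenue identity above applies to it verbatim. Integrating the pointwise inequality against $f_{\cdist}(c)$ then gives
\[\ex[\rev^{\mathrm{ALG}}]=\int f_{\cdist}(c)\,\mathrm{ALG}_c\,dc\ \ge\ \alpha\int f_{\cdist}(c)\,\mathrm{OPT}_c\,dc\ =\ \alpha\,\ex[\rev^{*}]\enspace,\]
which is the claim.

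The only genuine subtlety --- and exactly why monotonicity appears as an explicit hypothesis --- is that both the clean revenue identity and the truthfulness/IC of the induced mechanism rest on the allocation being monotone in the sellers' costs; an arbitrary pricing algorithm need not have this property, and without it neither the rearranged revenue expression nor the payment characterization from \cite{M81,AT01} is available, so the argument breaks. A lesser point worth a one-line remark is that the relevant pricing problem here carries per-item costs $\theta_{\cdist_i}(c_i)$ rather than being cost-free; this is immaterial, since a cost is simply subtracted inside the pricing objective and the specific approximations developed in the later sections (virtual reserve prices, greedy selection of $k$ items, cascades) accommodate such costs directly.
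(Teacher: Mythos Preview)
Your argument is correct and follows the same route as the paper: fix $c$, reduce to the algorithmic pricing instance with item costs $\theta_{\cdist_i}(c_i)$, apply the $\alpha$-approximation pointwise, and integrate, with monotonicity guaranteeing that the revenue identity and truthful seller payments apply to the approximate scheme. The paper's own proof is a terse three sentences; the one small difference in emphasis is that the paper handles the ``without costs'' hypothesis by explicitly \emph{shifting} each value distribution $\vdist_i$ by $\theta_{\cdist_i}(c_i)$ so that a cost-free pricing algorithm can be invoked directly, whereas you instead remark (correctly) that the cost term is absorbed by such a shift and that the later approximations handle costs anyway --- same content, different packaging.
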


\begin{proof}
Given the true cost vector $c$ (incentive compatibility is guaranteed by monotonicity), the virtual costs $\theta_{\cdist_i}(c_i)$ are constants. For each item $i$, we solve the algorithmic pricing problem with the distribution $\vdist_i$ shifted by $\theta_{\cdist_i}(c_i)$. This guarantees an $\alpha$ approximation for every $c$ and thus an $\alpha$ approximation to the \ebp problem.
\end{proof}

Finally, we show that the intermediary need not explicitly query the buyer's value:
\begin{theorem}
The revenue optimal instantiation of the general algorithm can be implemented so that seller payments depend only on costs and which item is bought, not the precise value of the buyer.
\end{theorem}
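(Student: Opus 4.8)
The plan is to take the payment rule that incentive compatibility already forces on the intermediary and \emph{re-implement} it without changing any seller's expected payoff, so that the transfer to a seller becomes a function of the reported cost vector $c$ and the identity of the item the buyer bought --- and of nothing else. The starting observation is that, for a fixed cost vector $c$, the intermediary posts its prices $\bprice_i(c)$ before the buyer's type is realized, so the buyer is a pure price-taker; what then enters seller $i$'s decision problem is only the sale probability $q_i(c)=\int x_i(t,c)\,dT$ and the expected transfer $g_i(c)=\int \sprice_i(t,c)\,dT$, both of which are functions of $c$.

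The key steps, in order, would be the following. First, record that $g_i(\cdot,c_{-i})$ is precisely Myerson's payment for the monotone allocation curve $c_i\mapsto q_i(c)$: this follows by integrating the previous theorem's payment identity over $t$ (Fubini) and invoking the seller-monotonicity theorem. Second, because each seller reports before the buyer arrives --- and because the type distribution $T$ is independent of $\cdist$, so a seller's report never perturbs the buyer's type distribution, only the posted prices --- a seller evaluates a report purely by its expected payoff over $t\sim T$; hence any scheme that delivers expected transfer $g_i(c)$ to seller $i$ for every $c$ is incentive compatible (in expectation) and yields the same expected revenue as the optimal mechanism. Third, re-implement the transfers via the randomization device of Archer et al.~\cite{APTT03}: when the buyer buys item $i$, pay seller $i$ the amount $g_i(c)/q_i(c)$ and pay all other sellers nothing. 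Fourth, verify the re-implementation: conditioning on $c$ and averaging over the buyer's type, seller $i$ receives $q_i(c)\cdot g_i(c)/q_i(c)=g_i(c)$, exactly as in the optimal mechanism, so truthfulness and expected revenue carry over; yet the realized payment now refers only to $c$ and to which item was purchased, never to the buyer's valuation $v$, which is exactly the property the payment step of Algorithm~\ref{alg:general} asks for.

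The step I expect to be the real obstacle is the incentive check buried in the third and fourth steps: having made the transfer a function of the buyer's \emph{realized} choice, one must be sure this does not open up a profitable misreport. It does not, for the reason already isolated: a seller commits to its report before the buyer acts, and the report affects the seller's payoff only through the pair $q_i(c)$ and $g_i(c)$, which we have left untouched for every $c$; so the deviation analysis reduces to the textbook single-parameter argument for allocation rule $q_i$ with Myerson payment $g_i$. The one loose end is the degenerate case $q_i(c)=0$, where item $i$ is never sold: here the individual-rationality normalization that fixes the additive constant in the previous theorem's payment formula forces $g_i(c)=0$, so seller $i$ is simply paid nothing and the quotient $g_i(c)/q_i(c)$ is never actually evaluated.
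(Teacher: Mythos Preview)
Your proposal is correct, and the core idea---replace $\sprice_i(t,c)$ by its $T$-average $g_i(c)=\ex_{t\sim T}[\sprice_i(t,c)]$---is exactly the paper's proof, which consists of the single line ``Suppose truthful payments $\sprice_i(t,c)$ exist. Charge $\sprice_i(c)=\ex_{t\sim T}[\sprice_i(t,c)]$.'' The paper stops there: it simply pays every seller $g_i(c)$ unconditionally, which already depends only on the cost vector and therefore trivially satisfies the theorem. You go one step further and, to match the phrasing of Algorithm~\ref{alg:general} (``Pay the seller whose item was bought''), repackage the same expected transfer as a payment of $g_i(c)/q_i(c)$ made only upon sale, invoking the Archer et al.\ randomization device and then carefully re-verifying incentive compatibility and handling the $q_i(c)=0$ case. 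That extra layer is sound but not needed for the theorem as stated; it buys consistency with the algorithm's ``pay only the winner'' description at the cost of a more involved argument, whereas the paper's one-liner is shorter but leaves the reader to reconcile it with the algorithm box.
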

\begin{proof} Suppose truthful payments $\sprice_i(t,c)$ exist. Charge $\sprice_i(c)=\ex_{t\sim T}[\sprice_i(t,c)]$.
\end{proof}

\section{Selling the Best Single Item}\label{sec:singleitem}

A simple strategy for the intermediary is to find the single most profitable item and only show that item to the buyer. This strategy can generally guarantee at most a $\frac1n$ fraction of the optimal revenue; however, we show that this gives a much better approximation when the value of each item is independent and drawn from a MHR distribution $\vdist_i$:
\begin{theorem}\label{thm:singleitem}
When the buyer's valuations $\vdist_i$ are independent and MHR, selling the single most profitable item (based on virtual costs $\theta_{\cdist_i}(c_i)$) in an optimal auction generates at least a $\frac1{e\ln n}$ fraction of the optimal revenue in either the full attention or cascade model.
\end{theorem}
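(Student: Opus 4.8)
\medskip\noindent\emph{Proof plan.}
The idea is to peel away the market-design layer and reduce the claim to a self-contained inequality about the maximum of independent MHR random variables. By Corollary~\ref{cor:algebayapprox}, applied to the (monotone) strategy ``post the Myerson price on the single most profitable item'' --- which behaves identically in the full-attention and cascade models since it offers only one item --- it suffices to treat the underlying algorithmic pricing problem: for a unit-demand buyer with independent values $v_i\sim\vdist_i$, all $\vdist_i$ MHR, posting the revenue-maximizing price on the item of largest single-item revenue should earn at least a $\tfrac1{e\ln n}$ fraction of the optimal revenue. (Corollary~\ref{cor:algebayapprox} shifts each $\vdist_i$ by the virtual cost $\theta_{\cdist_i}(c_i)$, which keeps the hazard rate monotone, so the estimates below are applied to the shifted distributions.)

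Let $F_i,h_i,\phi_i$ denote the cdf, hazard rate, and virtual value of $\vdist_i$; let $\rev_i=\max_p p\,(1-F_i(p))$ be the single-item Myerson revenue of item $i$, attained at the price $p_i^*$ with $\phi_i(p_i^*)=0$, i.e.\ $h_i(p_i^*)=1/p_i^*$; and set $R=\max_i\rev_i$, the revenue our strategy collects. For the upper bound, in either attention model the buyer leaves with at most one item, whose price is at most its value, which is at most $\max_i v_i$; hence the optimal revenue is at most the expected welfare $\ex[\max_i v_i]$ (this can be replaced by the tighter Myersonian virtual welfare $\ex[\max_i\max(\phi_i(v_i),0)]$ when squeezing constants). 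So the theorem reduces to the single inequality
\[\ex[\max_i v_i]\le e\ln n\cdot\max_i\rev_i\enspace,\]
which is the only place MHR is used.

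To prove this, I would first turn monotonicity of the hazard rate into a sub-exponential tail via identity~\eqref{eqn:hrid}: for $t\ge p_i^*$,
\[1-F_i(t)=(1-F_i(p_i^*))\,e^{-\int_{p_i^*}^{t}h_i(x)\,dx}\le(1-F_i(p_i^*))\,e^{-(t-p_i^*)/p_i^*}\enspace,\]
since $h_i(x)\ge h_i(p_i^*)=1/p_i^*$ on the range; and the same identity gives $1-F_i(p_i^*)=e^{-\int_0^{p_i^*}h_i(x)\,dx}\ge e^{-p_i^*h_i(p_i^*)}=1/e$, so $p_i^*\le e\,\rev_i\le eR$. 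Then, writing $\ex[\max_i v_i]=\int_0^\infty\Pr[\max_i v_i>t]\,dt$, I would truncate at a level $L\ge\max_i p_i^*$: the part below $L$ contributes at most $L$, while by a union bound and the tail estimate the part above $L$ is at most $\sum_i\int_L^\infty(1-F_i(t))\,dt\le n\,e\,R\,e^{-L/(eR)}$. Choosing $L=\Theta(R\ln n)$ to balance the two terms and optimizing the constant recovers the displayed inequality.

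The crux --- and the step I expect to demand the most care --- is this last inequality, and in particular the fact that the union bound over the $n$ items costs only a logarithmic factor, not a factor of $n$; this is exactly the property of MHR distributions that heavy-tailed distributions lack (for them the single-best-item strategy guarantees only $\tfrac1n$ of the optimum), and it is exactly what the sub-exponential tail supplies. Two further points deserve checking: (i) the truncation optimization must be pushed far enough to reach the stated constant $\tfrac1{e\ln n}$ rather than merely $\Omega(1/\ln n)$, which may require using the virtual-welfare rather than the welfare upper bound; and (ii) one must verify that the tail estimate and especially the inequality $p_i^*\le e\,\rev_i$ --- whose proof above relies on $1-F_i(p_i^*)\ge1/e$ --- survive the virtual-cost shift of Corollary~\ref{cor:algebayapprox}.
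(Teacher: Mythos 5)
Your reduction via Corollary~\ref{cor:algebayapprox} and the observation that a single-item offer behaves identically in the two attention models matches the paper, but from there you take a genuinely different route. The paper never passes through welfare: it orders the items as a cascade, writes the optimum as a telescoping sum of incremental revenues $\Delta\rev_k$, and proves a dedicated lemma (Lemma~\ref{lem:expdom}) showing that for a fixed Myerson revenue $\rev_M$ the incremental revenue over an outside option is maximized by the exponential distribution, giving $\Delta\rev_k\le\frac1{e\gamma_k}e^{-\gamma_k\rev_{k-1}}$ with $\gamma_k=1/(e\rev_M^k)$; an induction then yields $\rev_n\le e\ln n\cdot\max_i\rev_M^i$ exactly. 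You instead bound the optimum by the expected welfare $\ex[\max_i v_i]$ and control that via the MHR tail estimate $1-F_i(t)\le(1-F_i(p_i^*))e^{-(t-p_i^*)/p_i^*}$ together with $1-F_i(p_i^*)\ge1/e$ (hence $p_i^*\le e\rev_i$), a union bound, and truncation; each of these steps checks out, and the resulting argument is more elementary and self-contained than the paper's, avoiding the hybrid-distribution construction entirely. What the paper buys by doing the harder worst-case-is-exponential analysis is precisely the constant: optimizing your truncation at $L=eR\ln n$ gives $\ex[\max_i v_i]\le eR(\ln n+1)$, i.e.\ a $\frac1{e(\ln n+1)}$-approximation, which matches the advertised $\Omega(1/\log n)$ but not the stated $\frac1{e\ln n}$ --- and the welfare bound genuinely cannot reach that constant for small $n$, since a single exponential already has $\ex[v]=e\,\rev_M$ while $e\ln 2<e$. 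You correctly flag this and the remedy (replacing welfare by the virtual welfare $\ex[\max_i\phi_i(v_i)^+]$), but carrying that out essentially amounts to reproving that exponentials are extremal, which is exactly the content of Lemma~\ref{lem:expdom}. Your other flagged concern is benign: the virtual-cost shift translates and truncates $\vdist_i$, which preserves the monotone hazard rate and both inequalities your tail argument needs.
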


To prove Theorem~\ref{thm:singleitem}, we first to show that the benefit from selling an extra item with distribution $\vdist$ in the cascade model is maximized when $\vdist$ is an exponential distribution. We then bound the approximation for items with exponential distributions.

Suppose a seller wishes to sell a single item to a buyer whose value is drawn from the distribution $\vdist$ (the seller's cost is $0$).  Let $\rev_{M}$ be the revenue generated by selling the item by itself using Myerson's optimal auction, and let $p_M$ be the price at which Myerson's auction sells the item.

Suppose also that the seller has an outside option guaranteeing revenue of $\rev_{OUT}$ if the item does not sell. Let $\rev_\vdist>\rev_M$ be the revenue generated by the optimal auction when the seller considers his outside option, and let $\Delta\rev_\vdist=\rev_\vdist-\rev_{OUT}$ be the optimal incremental revenue, i.e. the extra benefit to the seller on top of his outside option. Then we have the following lemma:
\begin{lemma}\label{lem:expdom}
For a fixed value of $\rev_{M}$, the incremental revenue $\Delta\rev_\vdist$ is maximized when $\vdist$ is an exponential distribution for any value of $\rev_{OUT}$. In particular,
\[\Delta\rev_\vdist\leq\frac1{e\gamma}e^{-\gamma\rev_{OUT}}\]
where $\gamma=\frac1{e\rev_M}$.
\end{lemma}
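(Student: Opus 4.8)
The plan is to push everything through the hazard rate $\hr_\vdist$ and then use monotonicity twice. First I would put the relevant quantities in closed form. By the Myerson-with-outside-option characterization quoted above (with seller cost $0$), the optimal price $\eta$ satisfies $\eta-\rev_{OUT}=1/\hr_\vdist(\eta)$, and selling at $\eta$ yields revenue $\eta(1-F(\eta))+\rev_{OUT}F(\eta)$, so
\[\Delta\rev_\vdist=(\eta-\rev_{OUT})(1-F(\eta))=\frac{1-F(\eta)}{\hr_\vdist(\eta)}\enspace.\]
Likewise $p_M=1/\hr_\vdist(p_M)$ and $\rev_M=p_M(1-F(p_M))=(1-F(p_M))/\hr_\vdist(p_M)$. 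Set $\lambda:=\hr_\vdist(p_M)$, so $p_M=1/\lambda$; a one-line monotonicity check (the map $p\mapsto p-\rev_{OUT}-1/\hr_\vdist(p)$ is nondecreasing under MHR and is $\le 0$ at $p_M$) gives $\eta\ge p_M$.

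Next I would convert the constraint ``$\rev_M$ is fixed'' into a lower bound on $\lambda$. Using the identity $1-F(z)=e^{-\int_0^z\hr_\vdist}$, the relation $\rev_M=(1-F(p_M))/\lambda$ becomes the \emph{exact} equality $e^{-\int_0^{p_M}\hr_\vdist}=\lambda\rev_M$; since $\hr_\vdist$ is nondecreasing and equals $\lambda$ at $p_M$, we get $\int_0^{p_M}\hr_\vdist\le\lambda p_M=1$, hence $\lambda\rev_M\ge e^{-1}$, i.e. $\lambda\ge\frac1{e\rev_M}=\gamma$. Equality holds exactly when $\hr_\vdist\equiv\lambda$ on $[0,p_M]$, i.e. for the exponential distribution.

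Then I would bound $\Delta\rev_\vdist$ from above. Writing $1-F(\eta)=e^{-\int_0^{p_M}\hr_\vdist}e^{-\int_{p_M}^\eta\hr_\vdist}$ and using $\hr_\vdist(x)\ge\lambda$ for $x\ge p_M$ gives $1-F(\eta)\le\lambda\rev_M\,e^{-\lambda(\eta-p_M)}=e\lambda\rev_M\,e^{-\lambda\eta}$ (since $\lambda p_M=1$), so
\[\Delta\rev_\vdist=(\eta-\rev_{OUT})(1-F(\eta))\le e\lambda\rev_M\,(\eta-\rev_{OUT})\,e^{-\lambda\eta}\enspace.\]
Maximizing the elementary function $\eta\mapsto(\eta-\rev_{OUT})e^{-\lambda\eta}$ (maximizer $\eta=\rev_{OUT}+1/\lambda$, value $\frac1{e\lambda}e^{-\lambda\rev_{OUT}}$) yields $\Delta\rev_\vdist\le\rev_M e^{-\lambda\rev_{OUT}}$, and since $\rev_{OUT}\ge0$ and $\lambda\ge\gamma$ this is at most $\rev_M e^{-\gamma\rev_{OUT}}=\frac1{e\gamma}e^{-\gamma\rev_{OUT}}$, the claimed bound. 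Finally I would verify that $\vdist=\exdist_\gamma$ (constant hazard rate $\gamma$) gives $\rev_M(\exdist_\gamma)=\rev_M$ and makes every inequality above tight, so the exponential attains the bound — which is exactly the ``maximized when $\vdist$ is exponential'' claim.

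I expect the main obstacle to be making the two appeals to MHR fit together: one turns the fixed-$\rev_M$ constraint into $\lambda\ge1/(e\rev_M)$, the other controls the tail $e^{-\int_{p_M}^\eta\hr_\vdist}$ by $e^{-\lambda(\eta-p_M)}$, and both rest on comparing $\hr_\vdist$ against its value $\lambda$ at the single point $p_M$ (together with $\eta\ge p_M$ so the comparison is on the right side). Everything after that is a single-variable optimization and the routine verification for the exponential.
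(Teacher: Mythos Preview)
Your proof is correct and rests on the same two uses of MHR as the paper's: bounding $\hr_\vdist$ below by $\lambda:=\hr_\vdist(p_M)$ on $[p_M,\eta]$ to control the tail, and bounding it above by $\lambda$ on $[0,p_M]$ to obtain $\lambda\ge\gamma=1/(e\rev_M)$. The execution, however, is more direct than the paper's. The paper introduces an explicit hybrid distribution $\tilde\vdist$ (equal to $\vdist$ up to $p_M$, exponential with rate $\lambda$ thereafter) and proves the chain $\Delta\rev_\vdist\le\Delta\rev_{\tilde\vdist}\le\Delta\rev_E$; the first inequality is argued via a survival-function matching point, and the second via the same $\lambda\ge\gamma$ computation you give. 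You collapse the first step by bounding $1-F(\eta)$ directly and then maximizing $(\eta-\rev_{OUT})e^{-\lambda\eta}$ over $\eta$ as a free variable --- the maximizer $\eta=\rev_{OUT}+1/\lambda$ is exactly $\eta_{\tilde\vdist}$, so your optimization implicitly recovers $\Delta\rev_{\tilde\vdist}=\rev_M e^{-\lambda\rev_{OUT}}$ without naming $\tilde\vdist$. The paper's decomposition makes the role of the exponential-tail surrogate more visible; your version is shorter and avoids the auxiliary object.
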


\begin{proof} First, we identify three distributions:
\begin{itemize}
\item $\vdist$ is the true distribution of $v$ as defined in the problem.
\item $E$ is the exponential distribution that generates the same revenue as $\vdist$ in Myerson's optimal auction. We will see that this is the exponential distribution with $\gamma=\frac1{e\rev_M}$.
\item $\tilde\vdist$ is a hybrid defined through its hazard rate:
\[\hr_{\tilde \vdist}=\begin{cases}\hr_\vdist(v),&v\leq p_M\\
\hr_\vdist(p_M),&otherwise.
\end{cases}\]
(The fact that $\tilde\vdist$ is well-defined follows from identity (\ref{eqn:hrid}).)
\end{itemize}
Our proof shows that $\Delta\rev_\vdist\leq\Delta\rev_{\tilde\vdist}\leq\Delta\rev_E$.

{\em Preliminaries.} For a distribution $F$, let $\eta_F$ denote the optimal selling price if we account for the seller's outside option. Thus, $\eta_F$ must satisfy
\[\eta_F-\rev_{OUT}=\frac1{\hr_F(\eta_F)}\enspace.\]
The total (optimal) revenue $\rev_F$ generated by $F$ is therefore
\[\rev_F=(1-F(\eta_F))\eta_F+F(\eta_F)\rev_{OUT}\enspace.\]
Thus, the incremental revenue of selling at a fixed price $p$ is
\begin{align*}
\rev_F(p)-\rev_{OUT}&=(1-F(p))p+F(p)\rev_{OUT}-\rev_{OUT}\\
&=(1-F(p))(p-\rev_{OUT})\enspace,
\end{align*}
and the optimal incremental revenue $\Delta\rev_F$ is
\[\Delta\rev_F=(1-F(\eta_F))(\eta_F-\rev_{OUT})=\frac{1-F(\eta_F)}{\hr_F(\eta_F)}=\rev_M\frac{\frac{1-F(\eta_F)}{1-F(p_M)}}{\frac{\hr_F(\eta_F)}{\hr_F(p_M)}}\enspace.\]

{\em Step 1: $\Delta\rev_\vdist\leq\Delta\rev_{\tilde\vdist}$.}

Let $z$ satisfy
\[1-F_{\tilde\vdist}(z)=1-F_{\vdist}(\eta_{\vdist})\enspace.\]
Since $\hr_\vdist(v)$ is assumed to be nondecreasing, the identity (\ref{eqn:hrid}) implies
\[1-F_{\tilde\vdist}(v)=e^{-\int_0^v\hr_{\tilde\vdist}(u)du}\geq e^{-\int_0^v\hr_{\vdist}(u)du}=1-F_{\vdist}(v)\enspace,\]
and thus $z\geq\eta_{\vdist}$ because $F$ is nondecreasing. We know that the optimal revenue $\rev_{\tilde\vdist}$ is at least the revenue generated by selling at price $z$, so
\begin{align*}
\Delta\rev_{\tilde\vdist}&\geq(1-F_{\tilde\vdist}(z))(z-\rev_{OUT})\\
&\geq(1-F_{\vdist}(\eta_\vdist))(\eta_\vdist-\rev_{OUT})
=\Delta\rev_\vdist\enspace.
\end{align*}
Thus, the incremental revenue from $\tilde\vdist$ is at least as large as the incremental revenue from $\vdist$.

{\em Step 2: $\Delta\rev_{\tilde\vdist}\leq\Delta\rev_{E}$.}

As shown above, the incremental revenue $\Delta\rev_{\tilde\vdist}$ is
\[\Delta\rev_{\tilde\vdist}=\rev_M\frac{\frac{1-F_{\tilde\vdist}(\eta_{\tilde\vdist})}{1-F_{\tilde\vdist}(p_M)}}{\frac{\hr_{\tilde\vdist}(\eta_{\tilde\vdist})}{\hr_{\tilde\vdist}(p_M)}}=\rev_M\frac{e^{-(\eta_{\tilde\vdist}-p_M)\hr_{\vdist}(p_M)}}{1}\enspace.\]
By the optimality of $\eta_{\tilde\vdist}$ and $p_M$ as well as the definition of $\tilde\vdist$ we know that
\begin{align*}
\eta_{\tilde\vdist}-\rev_{OUT}=\frac1{\hr_{\tilde\vdist}(\eta_{\tilde\vdist})}=\frac1{\hr_\vdist(p_M)}=p_M\enspace,
\end{align*}
and so $\eta_{\tilde\vdist}-p_M=\rev_{OUT}$. Thus,
\[\Delta\rev_{\tilde\vdist}=\rev_Me^{-\rev_{OUT}\hr_{\vdist}(p_M)}\enspace.\]
Similar analysis of the exponential distribution $E$ shows that
\[\Delta\rev_{E}=\rev_Me^{-\rev_{OUT}\gamma}\enspace.\]
Thus, to show $\Delta\rev_{E}\geq\Delta\rev_{\tilde\vdist}$ we need only show $\gamma\leq\hr_{\vdist}(p_M)$.

One can check that the optimal revenue of $E$ in the absence of $\rev_{OUT}$ is
\[\frac{1-F_{E}(\frac1\gamma)}{\hr_{E}(\frac1\gamma)}=\frac1{e\gamma}\enspace.\]
Thus, by definition of $E$ we have chosen $\gamma$ so that $\rev_M=\frac{1}{e\gamma}$. Using the facts that $\rev_M=\frac{1-F_{\vdist}(p_M)}{\hr_\vdist(p_M)}$, $\hr_\vdist$ is nondecreasing, and $p_M=\frac1{\hr_\vdist(p_M)}$ we get
\begin{align*}
\hr_{\vdist}(p_M)&=e\gamma(1-F_\vdist(p_M))\\
&=e\gamma e^{-\int_0^{p_M}\hr_\vdist(v)dv}\\
&\geq e\gamma e^{-p_M\hr_\vdist(p_M)}= e\gamma e^{-1}=\gamma
\end{align*}This gives
\[\Delta\rev_{\tilde\vdist}=\rev_Me^{-\rev_{OUT}\hr_{\vdist}(p_M)}\leq\rev_Me^{-\rev_{OUT}\gamma}=\Delta\rev_{E}\]
as desired.

\end{proof}

We can now prove our approximation theorem:

\begin{proof} (of Theorem~\ref{thm:singleitem}) Following Corollary~\ref{cor:algebayapprox}, it is sufficient to consider the pricing problem of a single seller without costs. Let $\Pi_{OPT}$ be the optimal ordering of items in a cascade. Let $\rev_k$ be the revenue if we only sold the last $k$ items, and let $\Delta\rev_k=\rev_k-\rev_{k-1}$ be the incremental benefit of being able to sell the $k$th item. Then by Lemma~\ref{lem:expdom} we have
\[\rev_k=\rev_{k-1}+\Delta\rev_k\leq\rev_{k-1}+\frac1{e\gamma_k}e^{-\gamma_k\rev_{k-1}}\leq\rev_{k-1}+\frac1{e\gamma_{min}}e^{-\gamma_{min}\rev_{k-1}}\]
where $\frac1{e\gamma_k}$ is the Myersonian revenue of selling item $i$ and $\gamma_{min}$ is the minimum over all $k$. We use induction to prove that $\rev_k\leq\frac{\ln k}{\gamma_{min}}$.

{\em Base case $k=2$:}
\[\rev_k\leq\frac1{e\gamma_{min}}e^{-\frac1e}<\frac{\ln2}{e\gamma_{min}}\]

{\em Inductive step:}
\begin{align*}
\rev_k&\leq \rev_{k-1}+\frac1{e\gamma_{min}}e^{-\gamma_{min}\rev_{k-1}}\\
&=\frac{\ln(k-1)}{\gamma_{min}}+\frac1{e\gamma_{min}}e^{-\gamma_{min}\frac{\ln(k-1)}{\gamma_{min}}}\\
&\leq\frac1{\gamma_{min}}\left(\ln(k-1)+\ln\left(\frac{k}{k-1}\right)\right)\\
&=\frac{\ln k}{\gamma_{min}}
\end{align*}

Thus, we have
\[\rev_n\leq (e\ln n)\max_i\rev_M^i\]
as desired.

Since the optimal revenue from a cascade is an upper bound on the optimal revenue in a full attention model, this implies a $\frac1{e\ln n}$ approximation for the full attention setting as well.
\end{proof}

\section{Approximation with (Limited) Full Attention}\label{sec:lfa}
As noted earlier, buyers often have limited attention --- a buyer might only look at the first page of search results or may scan down a list until he finds an item he is willing to buy. In this section we use anonymous virtual reserve prices to give approximations when the buyer considers a set of items and picks his favorite. We first consider {\em full attention model}, where the buyer is presented with a set of prices $\bprice_i$ and buyer considers all possible items, then we consider the more realistic {\em $k$-limited attention model} where the buyer only looks at $k$ items. In the later case, the intermediary must choose which $k$ items to show the buyer.

\subsection{Full Attention}

Chawla et al.~\cite{CHMS10} show how to compute an anonymous virtual reserve price that guarantees a $\frac12$-approximation to the algorithmic pricing problem --- we show that this gives a $\frac12$-approximation in the full attention model.

The full attention model is effectively a special case of a double auction. In their very recent work, Deng et al.~\cite{DGTZ12} give a $\frac14$-approximation in the more general setting with multiple buyers, each of whom are interested in buying an arbitrary (finite) number of items. This gives a (weaker) $\frac14$-approximation in our setting; however, their work also makes substantially stronger assumptions about the auctioneer's access to the buyer's distributions. As such, our result is substantially stronger for our single-buyer setting.

\begin{theorem}\label{thm:full2apx}
The $\frac12$-approximation of Chawla et al.~\cite{CHMS10} for the algorithmic pricing problem with a single unit demand bidder gives a $\frac12$-approximation to the \ebp problem.
\end{theorem}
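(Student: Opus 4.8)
The plan is to deduce the theorem from Corollary~\ref{cor:algebayapprox}, which turns any monotone $\alpha$-approximation for the underlying algorithmic pricing problem (with the value distributions shifted by the virtual costs) into an $\alpha$-approximation for the \ebp problem. So it suffices to check two things: (i) the anonymous virtual reserve price of Chawla et al.~\cite{CHMS10} is a $\frac12$-approximation to the single-unit-demand pricing problem with value distributions $\vdist_i$ shifted by $\theta_{\cdist_i}(c_i)$, for every fixed cost vector $c$; and (ii) the induced allocation is monotone from the sellers' perspective, i.e.\ the probability $\int x_i(t,c)\,dT$ that item $i$ is sold is non-increasing in $c_i$.

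Part (i) is essentially a citation. Once $c$ is fixed the virtual costs $\theta_{\cdist_i}(c_i)$ are constants, so --- as in the proof of Corollary~\ref{cor:algebayapprox} --- the intermediary faces a genuine instance of the algorithmic pricing problem with a single unit-demand bidder whose value for item $i$ is $v_i-\theta_{\cdist_i}(c_i)$, and \cite{CHMS10} says the best anonymous virtual reserve price collects at least half the optimal item pricing on that instance. Averaging over $c$ and using the virtual-surplus form of revenue from Section~\ref{sec:revopt} (the intermediary's revenue is the $c$-expectation of the virtual surplus of the posted prices, and the optimum is the $c$-expectation of the optimal shifted pricing) gives the $\frac12$-bound. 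For part (ii) the structural fact to use is that, with anonymous virtual reserve $r$, the scheme posts for item $i$ the price $\bprice_i$ with $\phi_{\vdist_i}(\bprice_i)=r+\theta_{\cdist_i}(c_i)$ (if no such value exists the item is simply not offered). Holding $r$ and $c_{-i}$ fixed and raising $c_i$: $\theta_{\cdist_i}(c_i)$ is non-decreasing (this is exactly why $-\cdist_i$ is assumed MHR) and $\phi_{\vdist_i}^{-1}$ is non-decreasing (because $\vdist_i$ is MHR), so $\bprice_i$ weakly increases while every other posted price --- and hence the buyer's rule of picking the $j$ maximizing $v_j-\bprice_j$ --- is unchanged, so the event ``the buyer takes item $i$'' can only shrink.

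The step I expect to be the real obstacle is upgrading this ``fixed reserve'' monotonicity to the actual mechanism, which re-optimizes the anonymous reserve $r=r(c)$ for each reported cost vector: in principle, increasing $c_i$ could move $r(c)$ so as to lower a competitor's price enough to make item $i$ sell more often. I see two ways to close this. One is to commit in advance to a reserve chosen from the fixed distributions $\vdist$ and $\cdist$ rather than from the realized costs, and argue the $\frac12$ guarantee still holds in expectation over $c$, which makes monotonicity immediate by the argument above. The other --- which I would attempt first --- mirrors the revealed-preference argument used for seller monotonicity in Section~\ref{sec:revopt}: write the intermediary's expected virtual surplus at $c$ under the reserve-$r(c)$ scheme, use optimality of $r(c)$ within the one-parameter family of anonymous-virtual-reserve price vectors at $c$ and of $r(c^{+i})$ at $c^{+i}$, and combine the two inequalities so that the $\theta$-terms cancel and leave $\int x_i(t,c)\,dT\ge \int x_i(t,c^{+i})\,dT$; the care needed is that this family of price vectors shifts with $c$, so one has to verify the relevant schemes stay comparable across $c$ and $c^{+i}$. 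Once monotonicity holds, the single-parameter payment characterization~\cite{M81,AT01} (exactly, or in expectation via Archer et al.~\cite{APTT03}) makes the mechanism truthful for the sellers, and Corollary~\ref{cor:algebayapprox} concludes.
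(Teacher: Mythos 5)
Your reduction to Corollary~\ref{cor:algebayapprox} and your identification of what must be checked --- the $\frac12$ guarantee per fixed cost vector (a citation) plus seller monotonicity --- match the paper exactly, and you have correctly located the crux: monotonicity is easy for a \emph{fixed} anonymous reserve, and the real difficulty is that the reserve $r(c)$ is recomputed as $c_i$ changes. But the proposal stops at naming this obstacle rather than closing it, and neither of your two suggested escape routes works as stated. Route (a), committing to a cost-independent reserve, forfeits the Chawla et al.\ guarantee: their bound is per-instance, the instance the buyer faces genuinely changes with $c$ through the shifts $\theta_{\cdist_i}(c_i)$, and there is no reason a single reserve captures half of $\mathrm{OPT}(c)$ for every $c$ or even in expectation. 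Route (b), the revealed-preference argument, requires the reserve to be \emph{revenue-optimal} within a $c$-independent feasible family; but the CHMS reserve is the median of the max-virtual-value distribution (not a revenue maximizer within the family), and the family of anonymous-reserve price vectors itself moves with $c$ --- precisely the two points at which the cross-substitution inequalities fail, as you half-acknowledge.

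The paper's actual resolution is a direct sandwich on the median, and it is short. Decreasing $c_i$ by enough to shift item $i$'s effective value distribution right by $\delta$ makes $\max_j\phi_{\vdist_j}(v_j)$ stochastically larger, but by no more than the fully shifted version (all distributions shifted by $\delta$), whose median is exactly $r+\delta$. Hence the new reserve satisfies $r\le r'\le r+\delta$, which forces $p_j'\ge p_j$ for every $j\ne i$ and $p_i'\le p_i+\delta$. Now compare against the benchmark price vector $(p_{-i},p_i+\delta)$ on the shifted instance: that benchmark preserves every surplus $v_j-p_j$ and hence every purchase probability, while the actual recomputed prices weakly raise item $i$'s surplus and weakly lower every competitor's, so the probability of selling item $i$ can only go up. This pointwise domination of the reserve is the missing idea; without it (or an equivalent), your argument does not establish monotonicity for the mechanism as actually defined.
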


\begin{proof}
Following Corollary~\ref{cor:algebayapprox}, we need only show that the approximation of Chawla et al.~\cite{CHMS10} is monotone, i.e. the likelihood of the seller purchasing item $i$ is decreasing in $c_i$. Their algorithm applied to our setting can be described as follows:
\begin{enumerate}
\item Let $z=\max_{i}\phi_{\vdist_i}(v_i)$, and let $\mathcal Z$ be the distribution of $z$ when $v_i\sim\vdist_i$.
\item Let $r$ be the median\footnote{Chawla, Hartline, Malec, and Sivan~\cite{CHMS10} study a more general setting in which one cannot simply take $r$ to be the median; however, the median suffices for our simpler setting~\cite{CHK07,S84}.} of the distribution $\mathcal Z$.
\item Set $\bprice_i=\phi^{-1}(r)$
\end{enumerate}

Let $s_i=v_i-p_i$ be the surplus if the bidder buys item $i$. Note that the bidder's decision depends only on the surpluses and not on the specific values $v_i$ and $p_i$.

Let $\vdist_i^\delta$ denote the distribution $\vdist_i$ shifted right by $\delta$. Let $p$ be the price vector given by computing this approximation with $\vdist_i$, and let $p^\delta$ be the price vector computed from $\vdist_i^\delta$ for all $i$.

We first show that $p_i^\delta=p_i+\delta$ and that, consequently, the probability the buyer buys any given item is the same. By definition of $\phi$, the distribution of $\phi_{\vdist_i}(v_i)$ shifts by $\delta$:
\[\phi_{\vdist_i^\delta}(v_i+\delta)=v_i+\delta-\frac1{\hr_{\vdist_i}(v_i)}=\phi_{\vdist_i}(v_i)+\delta\enspace.\]
It follows that the median of the max distribution must also shift by $\delta$, so $r^\delta=r+\delta$. Now $p_i^\delta$ is chosen to satisfy
\[\phi_{\vdist_i^\delta}(p_i^\delta)=r^\delta=r+\delta\enspace.\]
Notice that
\[\phi_{\vdist_i^\delta}(p_i+\delta)=\phi_{\vdist_i}(p_i)+\delta=r+\delta\enspace,\]
so $p_i^\delta=p_i+\delta$. Thus, both $v_i$ and $p_i$ increased by $\delta$, so the surpluses remain the same and the probability that the buyer chooses $i$ is the same under $\vdist_i$ as it is under $\vdist_i^\delta$.

Next, consider shifting only the valuation of item $i$ and selling at the price vector $(p_{-i},p_i+\delta)$. Again, the surpluses remain the same, so the probability that the buyer chooses $i$ is the same as it was before the shift. However, since the value $r$ increases every time we shift a distribution to the right, for items $j\neq i$ we know that $p_j$ is lower than the price given by the Chawla et al. approximation. Similarly, we know that $p_i+\delta$ is higher than the price at which the approximation would sell item $i$. Thus, in every outcome the surplus $s_i$ is greater under the approximation than it is under $(p_{-i},p_i+\delta)$ while the surpluses $s_j$ are smaller. It follows that the likelihood the buyer chooses item $i$ can only increase, as desired.

\end{proof}

\subsection{$k$-Limited Attention}

In the $k$-limited attention model, the intermediary can only show $k$ items to the user (e.g. the first page of search results). The buyer subsequently chooses the item maximizing $v_i-\bprice_i$. We show that greedily choosing the set of items to show gives a constant-factor approximation.

Implementing the greedy algorithm requires estimating $\ex[\max_{i\in S}v_i]$ for an arbitrary set $S$. Doing so efficiently is nontrivial --- results of Cai and Daskalakis~\cite{CD11} show that the distribution of $\max_{i\in S}v_i$ may not be sufficiently concentrated to estimate $\ex[\max_{i\in S}v_i]$ from samples for arbitrary distributions. We leave the problem of estimating $\ex[\max_{i\in S}v_i]$ as an open question.

Conditioned on the intermediary's ability to estimate $\ex[\max_{i\in S}v_i]$, the greedy algorithm can be combined with the anonymous virtual reserve construction of Chawla et al.~\cite{CHMS10} to get a constant-factor approximation:
\begin{theorem}\label{thm:klimapx}
Assuming the auctioneer can efficiently estimate $\ex_{v_i\sim\vdist_i}[\max_{i\in S}\phi_{\vdist_i}(v_i)]$ for any set $S$, using a greedy algorithm to select which $k$ items to show and then using the approximation of Chawla et al.~\cite{CHMS10} to compute prices gives an $\frac{e-1}{2e}$ approximation in the $k$-limited attention model.
\end{theorem}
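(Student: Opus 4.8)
\begin{proofsketch}
The plan is to reduce, via Corollary~\ref{cor:algebayapprox}, to the costless algorithmic pricing problem and then combine three ingredients: (i) an upper bound on the $k$-limited benchmark in terms of expected maximum virtual values, (ii) submodularity of that objective, so that greedy loses only a $(1-\frac1e)$ factor when choosing which $k$ items to offer, and (iii) the anonymous virtual reserve of Chawla et al.~\cite{CHMS10}, which recovers a $\frac12$ fraction of that same objective on whatever set it is run. Chaining these gives $\frac12\cdot(1-\frac1e)=\frac{e-1}{2e}$.

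Fix the true cost vector $c$; by Corollary~\ref{cor:algebayapprox} it suffices to produce a $\frac{e-1}{2e}$-approximation to the optimal revenue obtainable by offering at most $k$ of the items whose distributions $\vdist_i$ have been shifted by $\theta_{\cdist_i}(c_i)$, and to check that the induced map from $c$ to sale probabilities is monotone. Write $g(S)=\ex_{v}[\max_{i\in S}\phi^+_{\vdist_i}(v_i)]$ for the expected largest nonnegative virtual value among items of $S$; note $g(S)=\ex[\max(0,\max_{i\in S}\phi_{\vdist_i}(v_i))]$, so the hypothesis that the auctioneer can estimate $\ex[\max_{i\in S}\phi_{\vdist_i}(v_i)]$ also lets it estimate $g$ (e.g.\ by adjoining a dummy item of virtual value $0$). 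The standard virtual-value upper bound for a unit-demand buyer says that any mechanism offering only the items in $S$ earns revenue at most $g(S)$, so the $k$-limited benchmark is at most $\max_{|S|\le k}g(S)$.

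Next I would show $g$ is monotone and submodular: for each fixed realization $v$, the set function $S\mapsto\max_{i\in S}\phi^+_{\vdist_i}(v_i)$ has marginal gains $(\phi^+_{\vdist_j}(v_j)-\max_{i\in S}\phi^+_{\vdist_i}(v_i))^+$, which are non-increasing in $S$; taking expectations preserves this. Running the greedy algorithm on $g$ under the cardinality-$k$ constraint then yields a set $S_g$ with $g(S_g)\ge(1-\frac1e)\max_{|S|\le k}g(S)$ by the Nemhauser--Wolsey--Fisher guarantee (exact evaluation of $g$ gives exactly $1-\frac1e$; estimation error only degrades this by a controllable additive term). Finally, running the Chawla et al.~anonymous virtual reserve on $S_g$ earns revenue at least $\frac12 g(S_g)$; here I would verify that the analysis used in Theorem~\ref{thm:full2apx} in fact lower-bounds the mechanism's revenue by half the expected maximum nonnegative virtual value of the offered set, rather than merely by half the optimal revenue of that set. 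Composing, the earned revenue is at least $\frac12 g(S_g)\ge\frac12(1-\frac1e)\max_{|S|\le k}g(S)\ge\frac{e-1}{2e}$ times the benchmark.

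The main obstacle is verifying the monotonicity hypothesis of Corollary~\ref{cor:algebayapprox} for the \emph{combined} mechanism. As in Theorem~\ref{thm:full2apx}, increasing $c_i$ shifts item $i$'s effective distribution and raises its posted price, suppressing the sale probability of $i$ \emph{for a fixed offered set}; the new subtlety is that increasing $c_i$ can also change the greedy set $S_g$. Since increasing $c_i$ only weakly lowers every marginal gain that involves $i$ and leaves all other gains untouched, greedy reproduces its earlier choices until the step at which it would have added $i$. If $i$ is ultimately excluded, its sale probability drops to $0$ (monotone); if $i$ is retained, one must combine the within-set argument of Theorem~\ref{thm:full2apx} with an exchange argument showing that the net effect of the competitors that were swapped in and out is in $i$'s favor. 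Pinning down this last step cleanly is where most of the care goes; the rest is a routine composition of the submodular-greedy and anonymous-reserve guarantees.
\end{proofsketch}
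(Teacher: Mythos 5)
Your proposal follows essentially the same route as the paper: reduce via Corollary~\ref{cor:algebayapprox}, upper-bound the $k$-limited benchmark by the expected maximum virtual value, use submodularity (Lemmas~\ref{lem:maxrv} and~\ref{lem:maxrvgreedy}) to get the greedy $(1-\frac1e)$ factor, and apply the Chawla et al.\ reserve for the remaining $\frac12$. The one place you go beyond the paper is the monotonicity of the combined greedy-plus-pricing mechanism in $c_i$, which the paper simply asserts in one sentence while you correctly flag (though do not fully close) the subtlety that raising $c_i$ can change the greedy set itself.
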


\begin{proof} The approximation of Chawla et al.~\cite{CHMS10} used in Theorem~\ref{thm:full2apx} guarantees a $\frac12$-approximation to $\ex_{v_i\sim\vdist_i}[\max_{i}\phi_{\vdist_i}(v_i)]$, which is the Myersonian revenue if the buyer were split into $n$ competing buyers (one for each good). This is itself an upper bound on the optimal revenue in the single-buyer setting.

Thus, if we knew the set $S$ of items (where $|S|=k$) that maximized the expected virtual value $\ex_{v_i\sim\vdist_i}[\max_{i\in S}\phi_{\vdist_i}(v_i)]$, then Theorem~\ref{thm:full2apx} would imply a $\frac12$-approximation here. Instead, we approximate $S$ using the greedy algorithm. As shown in Lemma~\ref{lem:maxrvgreedy} (below), the greedy algorithm gives a set $S$ whose expected virtual value is an $\frac{e-1}{e}$-approximation to the optimal $S$, giving an $\frac{e-1}{2e}$-approximation to the optimal revenue. The greedy algorithm is monotone from the sellers' perspective, proving the theorem.
\end{proof}

It remains to prove that the greedy algorithm gives an $\frac{e-1}{e}$-approximation as claimed in Theorem~\ref{thm:klimapx}. This will follow from submodularity.

Let $X_1,\dots,X_n$ be independent random variables, and let $S\subseteq[n]$ represent a subset of the $X_i$ variables. Then:
\begin{lemma}\label{lem:maxrv}
The function $f(S)=\ex[\max_{i\in S}X_i]$ is a monotone submodular set function.
\end{lemma}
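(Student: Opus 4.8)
My plan is to prove both properties one realization at a time and then average, using that the class of monotone set functions and the class of submodular set functions are each closed under nonnegative mixtures, hence under taking expectations. Throughout I would impose the mild integrability assumption $\ex[|X_i|]<\infty$ for every $i$ (automatic in the application, where the $X_i$ are bounded virtual values), so that each $f(S)$ is finite and the marginal differences below make sense, and I would use the convention $\max_{i\in\emptyset}X_i=0$ (taking $-\infty$ instead is immaterial).

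First I would fix a realization $x=(x_1,\dots,x_n)\in\Re^n$ and set $g_x(S)=\max_{i\in S}x_i$. Monotonicity of $g_x$ is immediate, since $S\subseteq T$ forces $\max_{i\in S}x_i\le\max_{i\in T}x_i$. For submodularity, I would write the marginal gain of an element $j\notin S$ as
\[ g_x(S\cup\{j\})-g_x(S)=\max\!\bigl(x_j-g_x(S),\,0\bigr), \]
and observe that, since $g_x(S)$ is non-decreasing in $S$, this quantity is non-increasing in $S$; hence for $S\subseteq T$ with $j\notin T$,
\[ g_x(S\cup\{j\})-g_x(S)=\max\!\bigl(x_j-g_x(S),\,0\bigr)\ \ge\ \max\!\bigl(x_j-g_x(T),\,0\bigr)=g_x(T\cup\{j\})-g_x(T). \]
Thus $g_x$ is monotone and submodular for every realization.

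Then I would average. Writing $g_X$ for the random set function determined by $X=(X_1,\dots,X_n)$, we have $f(S)=\ex[g_X(S)]$, so taking expectations of $g_X(S)\le g_X(T)$ gives monotonicity of $f$, and taking expectations of the pointwise marginal inequality above, together with linearity of expectation, gives
\[ f(S\cup\{j\})-f(S)=\ex\bigl[g_X(S\cup\{j\})-g_X(S)\bigr]\ \ge\ \ex\bigl[g_X(T\cup\{j\})-g_X(T)\bigr]=f(T\cup\{j\})-f(T), \]
which is submodularity of $f$.

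I do not expect a genuine obstacle here: the only delicate points are the degenerate set $S=\emptyset$ and the finiteness of the expectations, both dispatched by the conventions and integrability assumption above. I would also note that independence of the $X_i$ is not used anywhere in this argument — it works verbatim for arbitrarily dependent $X_i$ — so the stated hypothesis is stronger than this particular lemma requires.
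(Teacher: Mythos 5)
Your proof is correct and takes essentially the same route as the paper: establish monotonicity and submodularity of $\max_{i\in S}x_i$ pointwise for each realization, then pass to the expectation by linearity. The paper states this in one line; you have simply filled in the pointwise diminishing-returns calculation and the (harmless) integrability and empty-set caveats.
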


\begin{proof}
The function $f(S)=\max_{i\in S}X_i$ is a monotone submodular function, therefore by linearity of expectation so is $f(S)=\ex[\max_{i\in S}X_i]$.
\end{proof}

The following lemma follows from the standard analysis of the maximum coverage problem:
\begin{lemma}\label{lem:maxrvgreedy}

Suppose we asked to select a set $S$ of size $k$ that maximizes $\ex[\max_{i\in S} X_i]$. The greedy algorithm gives a $(1-\frac1e)$ approximation for this problem.
\end{lemma}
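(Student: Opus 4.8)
The plan is to invoke the classical greedy analysis for maximizing a monotone submodular set function under a cardinality constraint (Nemhauser--Wolsey--Fisher), which by Lemma~\ref{lem:maxrv} applies to $f(S)=\ex[\max_{i\in S}X_i]$ once we normalize it. Concretely, I would adjoin a dummy variable $X_0\equiv 0$ to the ground set and henceforth work with $f(S)=\ex[\max_{i\in S\cup\{0\}}X_i]$ (equivalently, adopt the convention $\max_{i\in\emptyset}X_i=0$); since a constant is independent of everything, Lemma~\ref{lem:maxrv} still gives that $f$ is monotone and submodular, and now also $f(\emptyset)=0$. The greedy algorithm builds a chain $\emptyset=S_0\subseteq S_1\subseteq\cdots\subseteq S_k$ with $S_j=S_{j-1}\cup\{i_j\}$ and $i_j\in\argmax_i\bigl(f(S_{j-1}\cup\{i\})-f(S_{j-1})\bigr)$.

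First I would establish the one-step progress inequality. Letting $S^*$ be an optimal set of size $k$, monotonicity and submodularity give
\[ f(S^*)\le f(S^*\cup S_{j-1})\le f(S_{j-1})+\sum_{i\in S^*\setminus S_{j-1}}\bigl(f(S_{j-1}\cup\{i\})-f(S_{j-1})\bigr). \]
The sum has at most $k$ terms, so the largest marginal gain available at step $j$ — which is exactly what the greedy step takes — is at least $\tfrac1k\bigl(f(S^*)-f(S_{j-1})\bigr)$. Hence $f(S_j)-f(S_{j-1})\ge\tfrac1k\bigl(f(S^*)-f(S_{j-1})\bigr)$.

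Then I would unroll the recursion. Writing $\delta_j=f(S^*)-f(S_j)$, the previous line reads $\delta_j\le(1-\tfrac1k)\delta_{j-1}$, so $\delta_k\le(1-\tfrac1k)^k\delta_0$. Since $f(\emptyset)=0$ we have $\delta_0=f(S^*)$, and $(1-\tfrac1k)^k\le e^{-1}$, so $\delta_k\le e^{-1}f(S^*)$, i.e. $f(S_k)\ge(1-\tfrac1e)f(S^*)$, which is the claimed bound.

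The argument is essentially bookkeeping once Lemma~\ref{lem:maxrv} is available; the only point that needs care — and hence the "obstacle" worth flagging — is the normalization, i.e. making sure $f(\emptyset)=0$ so that the guarantee is multiplicative against $f(S^*)$ itself rather than against $f(S^*)-f(\emptyset)$. In the application of Theorem~\ref{thm:klimapx} this is automatic, since the quantity that governs revenue is $\ex[\max(0,\max_{i\in S}\phi_{\vdist_i}(v_i))]$, which is precisely $f$ with the dummy variable $X_0\equiv 0$ adjoined. No further difficulty is anticipated.
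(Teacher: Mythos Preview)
Your proposal is correct and follows exactly the route the paper intends: the paper simply asserts that the lemma ``follows from the standard analysis of the maximum coverage problem,'' i.e., the Nemhauser--Wolsey--Fisher greedy bound applied via Lemma~\ref{lem:maxrv}, and you have written that argument out in full. Your attention to the normalization $f(\emptyset)=0$ via the dummy variable $X_0\equiv 0$ is a genuine refinement over the paper, which glosses over this point even though the $X_i=\phi_{\vdist_i}(v_i)$ in Theorem~\ref{thm:klimapx} can be negative.
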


\section{Approximation with Cascading Attention}\label{sec:ca}

In the cascading attention model, the intermediary has a set of slots. The buyer looks at slots according to an order $\Pi$ and buys the first item for which $v_i\geq p_i$. Knowing this, the intermediary decides the matching of items to slots and the prices $\bprice_i$ for each item

\begin{theorem}\label{thm:cascade2}
Given any fixed ordering $\Pi$ of items, the optimal prices for that ordering generate at least $\frac12$ the revenue of the optimal cascade.
\end{theorem}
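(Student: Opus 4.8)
The plan is to show $\mathrm{OPT}_{\mathrm{cascade}} \le \ex[\max_i \phi_{\vdist_i}(v_i)^+]$ and, conversely, that the fixed ordering $\Pi$ equipped with a suitable anonymous virtual reserve price already collects $\ge \tfrac12 \ex[\max_i \phi_{\vdist_i}(v_i)^+]$ of revenue; this is the cascade analogue of the full attention result (Theorem~\ref{thm:full2apx}). By Corollary~\ref{cor:algebayapprox} it suffices to prove the statement in the cost-free algorithmic pricing setting, provided the price rule used below is monotone from the sellers' perspective — and that follows from the same surplus-coupling as in the proof of Theorem~\ref{thm:full2apx}: shifting $\vdist_i$ to the right raises item $i$'s posted price relative to the others, which in a cascade can only decrease the probability the buyer reaches and buys item $i$.

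\emph{Upper bound on the optimal cascade.} For \emph{any} ordering and \emph{any} price vector, a cascade is a deterministic posted-price mechanism, hence dominant-strategy truthful, with an allocation rule $x_i(v)$ that is monotone in each coordinate and satisfies $\sum_i x_i(v) = \mathbf{1}[\text{the buyer buys}] \le 1$. By Myerson's lemma its expected revenue equals $\ex[\sum_i \phi_{\vdist_i}(v_i)\,x_i(v)] \le \ex[\max_i \phi_{\vdist_i}(v_i)^+]$. Since this bound mentions neither the ordering nor the prices, $\mathrm{OPT}_{\mathrm{cascade}} \le \ex[\max_i \phi_{\vdist_i}(v_i)^+]$.

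\emph{Lower bound for the fixed ordering $\Pi$.} Use the anonymous virtual reserve of Theorem~\ref{thm:full2apx}: set $\bprice_i = \phi_{\vdist_i}^{-1}(r)$, where $r$ is the median of $\max_i \phi_{\vdist_i}(v_i)$. The only thing that changes relative to the full attention case is the reachability bookkeeping: the buyer reaches item $i$ precisely when every item before $i$ in $\Pi$ has virtual value below $r$, an event that contains $\{\max_j \phi_{\vdist_j}(v_j) < r\}$ and therefore has probability $\ge \tfrac12$ \emph{for every item and every ordering}. Writing the revenue as $\sum_i \Pr[\text{reach } i]\,\ex[\phi_{\vdist_i}(v_i)\,\mathbf{1}[\phi_{\vdist_i}(v_i) \ge r]]$ and splitting $\phi\,\mathbf{1}[\phi \ge r] = r\,\mathbf{1}[\phi \ge r] + (\phi - r)^+$, the first piece sums to $r\cdot\Pr[\text{the buyer buys}] = r\cdot\Pr[\max_j \phi_{\vdist_j}(v_j) \ge r] \ge \tfrac r2$, with \emph{no} reachability loss (the Samuel--Cahn observation), while the reachability bound applied to the second piece gives $\ge \tfrac12 \sum_i \ex[(\phi_{\vdist_i}(v_i) - r)^+] \ge \tfrac12 \ex[(\max_i \phi_{\vdist_i}(v_i) - r)^+]$. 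Adding the two pieces and using $(z-r)^+ \ge z^+ - r$ (valid since $r \ge 0$), the $r$-terms cancel and the revenue is at least $\tfrac12 \ex[\max_i \phi_{\vdist_i}(v_i)^+] \ge \tfrac12 \mathrm{OPT}_{\mathrm{cascade}}$. The optimal prices for $\Pi$ do at least this well, which proves the theorem.

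The main obstacle is getting the constant $\tfrac12$ rather than $\tfrac14$: any ``safe'' rule that simply shrinks each item's sale probability to force enough reachability pays a factor of two twice over. The point of taking $r$ to be the median of $\max_i \phi_{\vdist_i}(v_i)$ is that it simultaneously forces reachability $\ge \tfrac12$ in \emph{every} ordering and lets the dominant ``$r$ per sale'' term of the revenue be collected with no reachability penalty. The only remaining nuisance — a negative median, or an atom of $\max_i \phi_{\vdist_i}$ exactly at its median — is handled by the same tie-breaking/randomization already used in Theorem~\ref{thm:full2apx}.
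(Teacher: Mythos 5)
Your proof is correct (modulo the edge cases you flag, which are handled by the standard conventions: work with $\phi_{\vdist_i}(v_i)^+$ so the median threshold is nonnegative, and randomize at atoms), but it takes a genuinely different route from the paper. The paper never introduces a benchmark external to the cascade model: it computes the optimal prices for $\Pi$ by backward induction (each item's price is the optimal single-item price given the downstream revenue as an outside option), invokes Lemma~\ref{lem:sia-lem} to argue that an item's incremental revenue is decreasing in the cumulative revenue it faces, and then runs an exchange argument against the optimal ordering $\Pi^{OPT}$: either the suffix of $\Pi$ already collects $\tfrac12\rev^{OPT}$, or every item contributing to the top half of $\rev^{OPT}$ faces a smaller outside option in $\Pi$ and hence contributes at least as much there. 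You instead sandwich everything against $\ex[\max_i\phi_{\vdist_i}(v_i)^+]$: a Myerson virtual-surplus bound shows every cascade is dominated by this quantity, and a Samuel--Cahn-style split at the median $r$ shows the anonymous virtual reserve recovers half of it under \emph{any} ordering, so the optimal prices for $\Pi$ do too. Your version buys more: the benchmark $\ex[\max_i\phi_{\vdist_i}(v_i)^+]$ also dominates the full-attention optimum (indeed any incentive-compatible single-buyer mechanism), so your argument simultaneously reproves Theorem~\ref{thm:full2apx} and shows the ordering costs at most a factor of $2$ even against that stronger benchmark; it also sidesteps Lemma~\ref{lem:sia-lem} and the somewhat delicate position-by-position comparison between $\Pi$ and $\Pi^{OPT}$ in the paper's proof. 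What the paper's route buys is that it stays entirely within the cascade model, analyzes the genuinely optimal prices for $\Pi$ rather than exhibiting particular good ones, and avoids the virtual-reserve machinery. One small point worth making explicit in your write-up: the identity $\rev=\sum_i\Pr[\text{reach }i]\,\ex[\phi_{\vdist_i}(v_i)\mathbf{1}[\phi_{\vdist_i}(v_i)\ge r]]$ uses that the reach event for item $i$ depends only on the values of earlier items and is therefore independent of $v_i$; this is where the independence assumption on the $\vdist_i$ enters your argument.
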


\begin{proof} 
Define the cumulative revenue $\rev_k$ under ordering $\Pi$ as the revenue if we only sold the last $k$ items, and the incremental revenue $\Delta\rev_k$ is $\Delta\rev_k=\rev_k-\rev_{k-1}$.

First, we observe that the optimal prices for a given ordering $\Pi$ can be computed greedily. Because distributions are independent, the optimal price for item $k$ is simply the optimal single-item auction price given an outside option of $\rev_{k-1}$. Thus, the optimal prices can be computed recursively starting with the last item in $\Pi$. Lemma~\ref{lem:sia-lem} (below) therefore implies that the incremental revenue of an item is decreasing in its cumulative revenue.

Now, let $z$ be the smallest value of $k$ such that $\rev_k>\frac12\rev^{OPT}$ in the optimal order $\Pi^{OPT}$. If the cumulative revenue of this item in $\Pi$ is at least $\frac12\rev^{OPT}$, then we are done.

Otherwise, the cumulative revenue of each item $k\geq z$ in $\Pi^{OPT}$ is smaller in $\Pi$ than it is in the optimal cascade $\Pi^{OPT}$. This implies that the incremental revenue $\Delta\rev_k$ is greater in $\Pi$ than it is in the optimum. However, the total incremental revenue of items $k\geq z$ is at least $\frac12\rev^{OPT}$ by construction, so the total revenue of $\Pi$ must be at least $\frac12\rev^{OPT}$.
\end{proof}

\begin{lemma}\label{lem:sia-lem}
For an optimal single-item auction, the revenue from selling the item $\eta_{\rev_{OUT}}(1-F(\eta_{\rev_{OUT}}))$ is decreasing in both the outside option $\rev_{OUT}$ and the total revenue of the auction $\rev_{OUT}+\eta_{\rev_{OUT}}(1-F(\eta_{\rev_{OUT}}))$.
\end{lemma}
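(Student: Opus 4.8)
The plan is to reduce the ``total revenue'' assertion to the ``outside option'' assertion, and to prove the latter by composing two elementary monotonicities. Write $r=\rev_{OUT}$, and let $\eta_r$ denote the optimal selling price; taking cost $0$ in the Myersonian formula recalled in Section~\ref{sec:prelim}, $\eta_r$ satisfies $\eta_r-r=\tfrac1{\hr_\vdist(\eta_r)}$. Put $g(r)=\eta_r(1-F(\eta_r))$ for the expected revenue collected on a sale and $R(r)=(1-F(\eta_r))\eta_r+F(\eta_r)r$ for the total revenue. I will show (i) $g$ is non-increasing in $r$ and (ii) $R$ is non-decreasing in $r$; together these immediately give that $g$, reparametrized as a function of $R$, is non-increasing, which is the second claim of the lemma.

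For (i) I first argue that $\eta_r$ is non-decreasing in $r$. If $r_1\le r_2$ but $\eta_{r_1}>\eta_{r_2}$, then since $\vdist$ is MHR we have $\hr_\vdist(\eta_{r_1})\ge\hr_\vdist(\eta_{r_2})$, so $\eta_{r_1}-r_1=\tfrac1{\hr_\vdist(\eta_{r_1})}\le\tfrac1{\hr_\vdist(\eta_{r_2})}=\eta_{r_2}-r_2\le\eta_{r_2}-r_1$, forcing $\eta_{r_1}\le\eta_{r_2}$, a contradiction. In particular $\eta_r\ge\eta_0=p_M$, the Myerson price, which satisfies $p_M\hr_\vdist(p_M)=1$. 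The second step is that $p\mapsto p(1-F(p))$ is non-increasing on $[p_M,\infty)$: its derivative is $(1-F(p))(1-p\,\hr_\vdist(p))$, and for $p\ge p_M$ monotonicity of the hazard rate gives $p\,\hr_\vdist(p)\ge p_M\hr_\vdist(p_M)=1$, so the derivative is $\le 0$. Composing the non-decreasing map $r\mapsto\eta_r$ (with range in $[p_M,\infty)$) with the non-increasing map $p\mapsto p(1-F(p))$ on that range shows $g$ is non-increasing. (If one instead reads the lemma's ``revenue from selling the item'' as the incremental revenue $(1-F(\eta_r))(\eta_r-r)$ used in Theorem~\ref{thm:cascade2}, an even shorter envelope-theorem argument applies: that quantity equals $\max_p(1-F(p))(p-r)$, whose derivative in $r$ is $-(1-F(\eta_r))\le 0$; both readings make the downstream proof go through.)

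For (ii), write $R(r)=\max_p\left[(1-F(p))p+F(p)\,r\right]$, a pointwise maximum of functions that are affine and non-decreasing in $r$ (slope $F(p)\ge 0$), hence non-decreasing in $r$; equivalently, by the envelope theorem $R'(r)=F(\eta_r)\ge 0$. Now if $R(r_1)<R(r_2)$ then $r_1<r_2$ (otherwise $R$ non-decreasing would give $R(r_1)\ge R(r_2)$), so by (i) the selling revenue at $r_1$ is at least that at $r_2$; thus the selling revenue is non-increasing as a function of the total revenue.

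I expect the only real obstacle to be the routine technical care of justifying monotonicity of $\eta_r$ and the differentiations without smoothness hypotheses on $F$: the first-order-condition argument above handles $\eta_r$ with no differentiability needed, and differentiating $p(1-F(p))$ requires only the assumed density. The one genuine degenerate case is when $R$ is locally constant, which can occur only if $\eta_r$ lies below the support of $\vdist$ so that raising the outside option changes nothing; there the selling revenue is likewise constant and the statement is trivial.
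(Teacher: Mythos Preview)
Your argument is correct and follows the same three-step skeleton the paper uses: show $\eta_r$ is increasing in $r$, deduce that the selling revenue $\eta_r(1-F(\eta_r))$ is decreasing in $r$, show the total revenue is increasing in $r$, and then compose. The paper's own proof merely asserts these three monotonicities in a sentence each, whereas you actually justify them (the contradiction argument for $\eta_r$, the derivative computation showing $p(1-F(p))$ decreases past $p_M$, and the envelope/pointwise-max argument for $R$); your parenthetical handling of the ``incremental revenue'' reading is also a nice catch, since that is in fact what Theorem~\ref{thm:cascade2} uses downstream.
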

\begin{proof} Optimality of $\eta_{\rev_{OUT}}$ has a few relevant consequences. First, the optimal selling price $\eta_{\rev_{OUT}}$ is increasing in $\rev_{OUT}$. Second, combined with monotonicity of $\eta_{\rev_{OUT}}$, optimality of $\eta_{\rev_{OUT}}$ implies that the revenue from selling the item $\eta_{\rev_{OUT}}(1-F(\eta_{\rev_{OUT}}))$ is also decreasing in $\rev_{OUT}$. Finally, the total revenue is increasing in $\rev_{OUT}$, so the revenue of selling the item $\eta_{\rev_{OUT}}(1-F(\eta_{\rev_{OUT}}))$ is decreasing in the total revenue.
\end{proof}

\section{Conclusion and Open Questions}

The \ebp problem models a fundamental situation faced by companies like eBay, Amazon, and Google. Much work in economics and algorithmic game theory is relevant, but little directly addresses the problems these companies face. Our results show that the intermediary's problem has significant structure and offers strategies with provable performance guarantees; however, it still leaves many open questions:
\begin{enumerate}
\item Which existing algorithmic pricing solutions are monotone from the sellers' perspective?
\item What other value and attention models permit good approximations?
\item Are there ways the intermediary can leverage mechanisms that do not require direct revelation by the sellers, such as eBay's commission model?
\item How does the model change with multiple bidders?
\item In practice, what is the best intermediation protocol for a company like eBay?
\end{enumerate}

\section{Thanks}

We would like to thank Costis Daskalakis, Darrell Hoy, Christos Papadimitriou, George Pierrakos, Balasubramanian Sivan, and Steve Tadelis for many helpful discussions.

\bibliographystyle{plain}
\bibliography{ebay}

\end{document}